\def\math#1{$#1$}
\def\v#1{{\mathbf #1}}
\def\frac#1#2{{#1\over #2}}
\def\x{{\mathbf x}}
\def\z{{\mathbf z}}
\def\a{{\mathbf a}}
\def\b{{\mathbf b}}
\def\dotfil{\leaders\hbox to 1.5mm{.}\hfill}
\newcounter{rmnum}
\def\RN#1{\setcounter{rmnum}{#1}\uppercase\expandafter{\romannumeral\value{rmnum}}}
\def\rn#1{\setcounter{rmnum}{#1}\expandafter{\romannumeral\value{rmnum}}}
\newtheorem{definition}{Definition}
\newtheorem{theorem}{Theorem}
\newtheorem{lemma}[theorem]{Lemma}
\newcommand{\eps}{\varepsilon}
\newcommand{\ProbB}[1]{\ensuremath{\mathbb{P}\left[#1\right]}}
\newcommand{\OO}{\mathcal{O}}
\newcommand{\RR}{\mathbb{R}}
\newcommand{\abs}[1]{\ensuremath{\left|#1\right|}}
\newcommand{\trace}[1]{\ensuremath{\mathrm{\textbf{tr}}\left(#1\right)}}
\newcommand{\logdet}[1]{\ensuremath{\mathrm{logdet}\left(#1\right)}}
\newcommand{\qr}[1]{\ensuremath{\mathrm{qr}\left(#1\right)}}
\newcommand{\alogdet}[1]{\ensuremath{\widehat{\mathrm{logdet}} }\left(#1\right)}
\newcommand{\logm}[1]{\ensuremath{\mathrm{\textbf{\footnotesize log}}\left[#1\right]}}
\newcommand{\ignore}[1]{}
\newcommand{\Id}{\mathbf{I}}
\newcommand{\g}{\mathbf{g}}
\newcommand\code[1]{\texttt{#1}}
\newcommand\Cpp{{C\nolinebreak[4]\hspace{-.05em}\raisebox{.4ex}{\tiny\bf ++}}}
\long\def\symbolfootnote[#1]#2{\begingroup%
\def\thefootnote{\fnsymbol{footnote}}\footnote[#1]{#2}\endgroup}
\newcommand{\FNormS}[1]{\mbox{}\|#1\|_\mathrm{F}^2}
\newcommand{\TNorm }[1]{\mbox{}\|#1\|_2  }
\newcommand{\transp}{^{\textsc{T}}}
\newcommand{\mat}[1]{{\ensuremath{\bm{\mathrm{#1}}}}}
\def\b{{\mathbf b}}
\def\v{{\mathbf v}}
\def\matA{\mat{A}}
\def\matB{\mat{B}}
\def\matC{\mat{C}}
\def\matD{\mat{D}}
\def\matI{\mat{I}}
\def\matL{\mat{L}}
\def\matQ{\mat{Q}}
\def\matU{\mat{U}}
\def\matX{\mat{X}}
\def\matY{\mat{Y}}
\DeclareMathSymbol{\Prob}{\mathbin}{AMSb}{"50}
\newcommand\remove[1]{}
\def\nnz{{ \rm nnz }}
\def\math#1{$#1$}
\def\frac#1#2{{#1\over #2}}
\DeclareMathSymbol{\R}{\mathbin}{AMSb}{"52}
\def\x{{\mathbf x}}
\def\z{{\mathbf z}}
\def\a{{\mathbf a}}
\def\b{{\mathbf b}}
\begin{document}

\title{A Randomized Algorithm for Approximating the Log Determinant of a
		Symmetric Positive Definite Matrix}
\author{Christos Boutsidis \thanks{Email: christos.boutsidis@gmail.com}
%Christos Boutsidis \thanks{Goldman Sachs. New York, NY. Email: christos.boutsidis@gmail.com
\and
Petros Drineas \thanks{Purdue University. West Lafayette, IN. Email: pdrineas@purdue.edu}
\and
Prabhanjan Kambadur \thanks{Bloomberg L.P. New York, NY. Email: pkambadur@bloomberg.net}
\and
Eugenia-Maria Kontopoulou \thanks{Purdue University. West Lafayette, IN. Email: ekontopo@purdue.edu}
\and
Anastasios Zouzias \thanks{Swisscom. Zurich, Switzerland. Email: anastasios.zouzias@swisscom.com}}
\date{}

\maketitle

\begin{abstract}
\noindent We introduce a novel algorithm for approximating the logarithm of the determinant of a symmetric positive definite (SPD) matrix. The algorithm is randomized and approximates the traces of a small number of matrix powers of a specially constructed matrix, using the method of Avron and Toledo~\cite{AT11}. From a theoretical perspective, we present additive and relative error bounds for our algorithm. Our additive error bound works for any SPD matrix, whereas our relative error bound works for SPD matrices whose eigenvalues lie in the interval $(\theta_1,1)$, with $0<\theta_1<1$; the latter setting was proposed in~\cite{icml2015_hana15}. From an empirical perspective, we demonstrate that a C++ implementation of our algorithm can approximate the logarithm of the determinant of large matrices very accurately in a matter of seconds.
\end{abstract}

\section{Introduction}

Given a matrix $\matA \in \R^{n \times n},$ the determinant of $\matA$, denoted by $\det(\matA)$, is one of the most important quantities associated with $\matA$. Since its invention by Cardano and Leibniz in the late 16th century, the determinant has been a fundamental mathematical concept with countless applications in numerical linear algebra and scientific computing. The advent of Big Data, which are often represented by matrices, increased the applicability of algorithms that compute, exactly or approximately, matrix determinants; see, for example,~\cite{leithead2005efficient,zhang2008log, zhang2007approximate,d2008first,hsieh2013big} for machine learning applications~(e.g., gaussian process regression) and~\cite{lesage2001spatial,kambadur2013parallel,friedman2008sparse,pace1997quick,pace2000method} for several data mining applications~(e.g., spatial-temporal time series analysis).

Formal definitions of the determinant include the well-known formulas derived by Leibniz and
Laplace; however, neither the Laplace nor the Leibniz  formula can be used to design an efficient, polynomial-time, algorithm to compute the determinant of $\matA$. To achieve this goal, one should rely on other properties of the determinant. For example, a standard approach would be to leverage the so-called $LU$ matrix decomposition or the Cholesky decomposition for symmetric positive definite matrices (SPD) to get an $O(n^3)$ deterministic algorithm to compute the determinant of $\matA$. (Recall that an SPD matrix is a symmetric matrix with strictly positive eigenvalues.)

In this paper, we are interested in approximating the logarithm of the determinant of a symmetric positive definite (SPD) matrix $\matA$. The logarithm of the determinant, instead of the determinant itself, is important in several settings~\cite{leithead2005efficient,zhang2008log, zhang2007approximate,d2008first,hsieh2013big,lesage2001spatial,kambadur2013parallel,friedman2008sparse,pace1997quick,pace2000method}.
\begin{definition}\textsc{[LogDet Problem definition]}\label{def:problem}
Given an SPD matrix $\matA \in \R^{n \times n}$, compute, exactly or approximately, $\logdet{\matA}$.
\end{definition}
\noindent Note that since all the eigenvalues of $\matA$ are strictly positive, the determinant of $\matA$ is strictly positive.
The best exact algorithm for the above problem simply computes the determinant of $\matA$ in cubic time and takes its logarithm. Few approximation algorithms have appeared in the literature, but they either lack a proper theoretical convergence analysis or do not work for all SPD matrices. We will discuss prior work in detail in Section~\ref{sec:related}.

\subsection{Our contributions}

We present a fast approximation algorithm for the problem of Definition~\ref{def:problem}. Our main algorithm (Algorithm~\ref{alg1a}) is randomized and its running time is
$$\OO\left(\nnz(\matA)\left(m \eps^{-2}+\log n \right)\log(1/\delta)\right),$$
where $\nnz(\matA)$ denotes the number of non-zero elements in $\matA$, $0<\delta<1$ denotes the failure probability of our algorithm, and (integer) $m >0$ and (real) $\varepsilon > 0$ are user-controlled accuracy parameters that are specified in the input of the algorithm.
The first step of our approximation algorithm uses the power method to compute an approximation to the dominant eigenvalue of $\matA$. This value will be used in a normalization~(preconditioning) step in order to compute a convergent matrix-Taylor expansion. The second step of our algorithm leverages a truncated matrix-Taylor expansion of a suitably constructed matrix in order to compute an approximation of the log determinant. This second step leverages a randomized trace estimation algorithm from~\cite{AT11}.

Let $\alogdet{\matA}$ be the value returned by our approximation algorithm~(Algorithm~\ref{alg1a}); let $\logdet{\matA}$ be the true log determinant of $\matA$; let $\lambda_{i}\left(\matA\right)$
denote the $i$-th eigenvalue of $\matA$ for all $i=1,\dots,n$ with
$\lambda_1(\matA)\ge\lambda_2(\matA)\ge\ldots\ge\lambda_n(\matA) > 0$; and let $\kappa\left(\matA\right) = \lambda_1(\matA)/\lambda_n(\matA)$ be the condition number of $\matA$. Our main result, proven in Lemma~\ref{thm1}, is that if
\begin{equation}\label{eqn:boundm}
m \ge \left\lceil 7\kappa\left(\matA \right)\log\left(\frac{1}{\varepsilon}\right)\right\rceil,
\end{equation}
then, with  probability at least $1-2\delta$,
\begin{equation}\label{eqn:mainresult}
\abs{ \alogdet{\matA} - \logdet{\matA} }  \leq
2\varepsilon \Gamma,
\end{equation}
where
$$
\Gamma = \sum_{i=1}^n  \log\left(7 \cdot \frac{\lambda_1(\matA)}{ \lambda_i(\matA) }\right).
$$
We now take a careful look at the above approximation bound. First, given our choice of $m$ in eqn.~(\ref{eqn:boundm}), the running time of the algorithm becomes
\begin{equation}\label{eqn:rtadd}
\OO\left(\nnz(\matA)\left(\kappa\left(\matA \right)\log\left(1/\varepsilon\right)\eps^{-2}+\log n \right)\log(1/\delta)\right).
\end{equation}
Thus, the running time of our algorithm increases linearly with the condition number of $\matA$.
The error of our algorithm scales with $\Gamma$, a quantity that is not immediately comparable to $\logdet{\matA}$. It is worth noting that the $\Gamma$ term increases \textit{logarithmically} with respect to the ratios $\lambda_1(\matA)/\lambda_i(\matA) \ge 1$. An obvious, but potentially loose upper bound for the sum of those ratios, is
\begin{equation}\label{eqn:gammabound}
\Gamma = \sum_{i=1}^n  \log\left( 7 \cdot \frac{\lambda_1(\matA)}{ \lambda_i(\matA) }\right)
\le n \cdot \log\left( 7 \kappa(\matA) \right).
\end{equation}
Our second result handles the family of SPD matrices whose eigenvalues all lie in the interval $(\theta_1,1)$, with $0<\theta_1<1$; this setting was proposed in~\cite{icml2015_hana15}.
In this case, a simplified version of Algorithm~\ref{alg1a} returns a relative error approximation to the log-determinant of the input matrix. Indeed, Lemma~\ref{thm2} proves that, with probability at least $1-\delta$,
\begin{equation*}
\abs{ \alogdet{\matA} - \logdet{\matA} }  \leq 2\varepsilon |\logdet{\matA}|.
\end{equation*}
The running time of the simplified algorithm is
\begin{equation}\label{eqn:rtrel}
\OO\left(\frac{\log(1/\varepsilon)\log(1/\delta)}{\varepsilon^2\theta_1} nnz(\matA)\right).
\end{equation}
Finally, we implemented our algorithm in \texttt{C++} and tested it on several large
dense and sparse matrices. Our dense implementation runs on top of
\texttt{Elemental}~\cite{poulson2013elemental}, a linear algebra library for
distributed matrix computations with dense matrices. Our sparse implementation
runs on top of \texttt{Eigen}~\footnote{\url{http://eigen.tuxfamily.org/}}, a
software library for sparse matrix computations. Our code is available to
download on Github (see Section~\ref{sxn:exps} for details and a link to our code).

\subsection{Related Work}\label{sec:related}

The most relevant result to ours is the work in~\cite{BP99}. Barry and Pace~\cite{BP99} described a randomized
algorithm for approximating the logarithm of the determinant of a matrix with special structure that we will describe below. They show that in order to approximate the logarithm of the determinant of a matrix
$\matA$, it suffices to approximate the traces of $\matD^{k}$, for $k=1,2,3...$ for a suitably constructed matrix $\matD$. Specifically, \cite{BP99} deals with approximations to SPD matrices $\matA$ of the form
$\matA = \matI_n - \alpha \matD,$
where $0 < \alpha < 1$ and all eigenvalues of $\matD$ are in the interval $\left[-1,1\right]$.
Given such a matrix $\matA$, the authors of~\cite{BP99} seek to derive an estimator $\alogdet{\matA}$ that is close to $\logdet{\matA}$. \cite{BP99} proved (using the so-called Martin expansion~\cite{Mar92}) that
$$ \log(\det(\matA)) = - \sum_{k=1}^{m} \frac{\alpha^k}{k} \trace{\matD^k} - \sum_{k=m}^{\infty} \frac{\alpha^k}{k} \trace{\matD^k}.$$
They considered the following estimator:
$$ \alogdet{\matA} =   \frac{1}{p}
\sum_{i=1}^{p}
\left(
\underbrace{
-n
\sum_{k=1}^{m}
\left(
  \frac{\alpha^k}{ k}   \frac{\z_i\transp \matD^k \z_i}{\z_i\transp \z_i}
\right)}_{V_i}
\right).
$$
All $V_i$ for $i=1\ldots p$ are random variables and the value of $p$ controls the variance of the estimator. The algorithm in~\cite{BP99} constructs vectors $\z_i$ $\in \R^n$ whose entries are independent identically distributed standard Gaussian random variables. The above estimator ignores the trailing terms
of the Martin expansion and only tries to approximate the first $m$ terms. \cite{BP99} presented the following approximation bound:
$$
\abs{ \alogdet{\matA} - \logdet{\matA} }  \leq \frac{n \cdot \alpha^{m-1}}{(m+1)(1-\alpha)}
 + 1.96 \cdot \sqrt{ \frac{\sigma^2}{p} },
$$
where $\sigma^2$ is the variance of the random variable $V_i$. The above bound fails with probability at most $0.05$.

We now compare the results in~\cite{BP99} with ours.
First, the idea of using the Martin expansion~\cite{Mar92} to relate the logarithm of the determinant
and traces of matrix powers is present in both approaches. Second, the algorithm of~\cite{BP99} is applicable to SPD matrices that have special structure, while our algorithm is applicable to any SPD matrix. Intuitively, we overcome this limitation of~\cite{BP99} by estimating the top eigenvalue of the matrix in the first step of our algorithm. Third, our error bound is much better that the error bound of~\cite{BP99}. To analyze our algorithm, we used the theory of randomized trace estimators of Avron and Toledo~\cite{AT11}, which relies on stronger measure-concentration inequalities than~\cite{BP99}, which uses the weaker Chebyshev's inequality.

A similar idea using Chebyshev polynomials appeared in the paper~\cite{pace2004chebyshev}; to the best of our understanding, there are no theoretical convergence properties of the proposed algorithm.  Applications to Gaussian process regression appeared in~\cite{leithead2005efficient,zhang2008log, zhang2007approximate}. The work of~\cite{Reusken2002} uses an approximate matrix inverse to compute the $n$-th root of the determinant of $\matA$ for large sparse SPD matrices. The error bounds in this work are a posteriori and thus not directly comparable to our bounds.

\cite{hunter2014computing} provides a strong worst-case theoretical result which is, however, only applicable to Symmetric Diagonally Dominant (SDD) matrices. The algorithm is randomized and guarantees that, with high probability,
$
\abs{ \alogdet{\matA} - \logdet{\matA} }  \leq \varepsilon \cdot n,
$
for a user specified error parameter $\varepsilon > 0$. This approach also uses the Martin expansion~\cite{Mar92} as well as ideas from preconditioning systems of linear equations with Laplacian matrices~\cite{spielman2004nearly}. The algorithm of~\cite{hunter2014computing} runs in time
$O\left(
\nnz(\matA) \varepsilon^{-2} \log^3\left(n\right)
\log^2\left( n \kappa(\matA)/\varepsilon \right)  \right).$
To compare to our approach, we need to combine the suboptimal upper bound for $\Gamma$ from eqn.~(\ref{eqn:gammabound}) with the bound of eqn.~(\ref{eqn:mainresult}). Then, we can run Algorithm~\ref{alg1a} with input
$$\varepsilon' = \frac{\varepsilon}{\log( 7 \kappa(\matA))},$$
instead of $\varepsilon$ to guarantee that the final error of our approximation will be bounded by $\varepsilon' n$. Then, we can observe that the running time of~\cite{hunter2014computing} depends
\textit{logarithmically} on the condition number of the input matrix $\matA$,
whereas our algorithm has a linear dependency on the condition number. Notice,
however, that our method is applicable to any SPD matrix while the method
in~\cite{hunter2014computing} is applicable only to SDD matrices; given current
state-of-the-art on Laplacian preconditioners it looks hard to extend the
approach of~\cite{hunter2014computing} to general SPD matrices.

Independently and in parallel with our work,~\cite{icml2015_hana15}
presented an algorithm using Stochastic Chebyshev Expansions for the
log-determinant problem. The algorithm is very similar in spirit to our
approach, using the Chebyshev instead of the Taylor expansion and achieves
relative-error guarantees for a special class of SPD matrices, namely matrices
whose eigenvalues all lie in the interval $(\theta_1,1-\theta_1)$ for some
$0<\theta_1<1/2$. As we already discussed, our algorithm also achieves a relative error bound under such an assumption; the only difference is that the running time of~\cite{icml2015_hana15} is proportional to $\sqrt{\frac{1}{\theta_1}}\log\frac{1}{\theta_1}$, whereas the running time of our approach (see eqn.~(\ref{eqn:rtrel})) is proportional to $\frac{1}{\theta_1}$. This slightly improved running time might be due to the use of the Stochastic Chebyshev Expansions. However, importantly, our algorithm works for \textit{any} SPD matrix, with arbitrary
eigenvalues. Not surprisingly, the added generality comes with a loss in accuracy and the relative error bound becomes an additive error bound.

Finally, two very recent papers~\cite{Han2016,Saibaba2016}\footnote{Both papers appeared after an earlier version of this paper was posted on ArXiv on March 2015 and cite this earlier version of our work.} presented algorithms to approximate the logdet of a matrix, highlighting the renewed importance of the topic. The work of~\cite{Saibaba2016} presents a very novel approach to approximate the logdet of a positive semi-definite matrix, using a randomized subspace iteration approach. To the best of our understanding, the relevant bounds in their work (Theorem 2 in~\cite{Saibaba2016}) are not directly comparable to our bounds. The work of~\cite{Han2016} follows the lines of~\cite{icml2015_hana15} and leverages the use of Chebyshev approximations to propose novel estimators for the trace of a matrix function. Among the many exciting applications of the proposed approach is an additive-error approach to approximate the logdet of any square non-singular matrix; the algorithm needs as inputs upper and lower bounds for all the singular values of the input matrix. Similar to the running time of our additive error algorithm in eqn.~(\ref{eqn:rtadd}), the time complexity of the proposed algorithm depends on the condition number of the input matrix (see Corollary 7 of~\cite{Han2016}).

We conclude by noting that common algorithms for the determinant computation assume floating point arithmetic and do not measure bit operations. If the computational cost is to be measured in bit operations, the situation is much more complicated and an exact computation of the determinant, even for integer matrices, is not trivial. We refer the interested reader to~\cite{eberly2000computing} for more details. 

\section{Preliminaries}

\subsection{Notation}

\noindent Let~\math{\matA,\matB,\ldots} denote matrices and let \math{\a,\b,\ldots} denote
column vectors. $\matI_{n}$ is the $n \times n$
identity matrix;  $\bm{0}_{m \times n}$ is the $m \times n$ matrix of zeros;
\trace{\matA} is the trace of a square matrix $\matA$;
the Frobenius and the spectral matrix-norms are:
$ \FNormS{\matA} = \sum_{i,j} \matA_{ij}^2$
and $\TNorm{\matA} = \max_{\TNorm{\x}=1}\TNorm{\matA\x}$.
We denote the determinant of a matrix $\matA$ by $\det(\matA)$ and the (natural) logarithm of the determinant of $\matA$ by $\logdet{\matA}$. We use $\log x$ to denote the natural logarithm of $x$ . Finally, given an event ${\cal E}$, $\ProbB{\cal E}$ denotes the probability of the event.

For an SPD matrix
$\matA \in \R^{n \times n},$ $\logm{\matA}$ is an $n \times n$ matrix defined as:
$
\logm{\matA} = \matU  \matD \matU \transp,
$
where $\matU  \in \R^{n \times n}$ contains the eigenvectors of $\matA$
and $\matD \in \R^{n \times n}$ is diagonal with entries being
$$\log(\lambda_1(\matA)), \log(\lambda_2(\matA)), \ldots{},
\log(\lambda_n(\matA)).$$
Let $x$ be a scalar variable that satisfies $|x|<1$. Then, using the Taylor expansion,
$$\log(1- x) = -\sum_{k=1}^{\infty} \frac{x^k}{k}.$$
A matrix-valued generalization of this identity is the following statement.
\begin{lemma}\label{lem:taylor}
Let $\matA \in \R^{n \times n}$ be a symmetric matrix whose eigenvalues all lie in the interval $(-1,1)$. Then,
\begin{equation*}\label{eq:matrixlog}
	\log(\matI_n -\matA) = -\sum_{k=1}^{\infty} \frac{\matA^k}{k}.
\end{equation*}
\end{lemma}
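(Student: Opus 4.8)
\emph{Proof strategy.} The plan is to reduce this matrix identity to the scalar Taylor expansion $\ln(1-x) = -\sum_{k\ge 1} x^k/k$ (valid for $|x|<1$) by diagonalizing $\matA$. Since $\matA$ is symmetric, the spectral theorem provides an orthogonal matrix $\matU \in \R^{n\times n}$ and a diagonal matrix $\mat{\Lambda} = \mathrm{diag}(\lambda_1(\matA),\dots,\lambda_n(\matA))$ with $\matA = \matU \mat{\Lambda} \matU\transp$; by hypothesis each $\lambda_i(\matA) \in (-1,1)$. Consequently $\matI_n - \matA = \matU(\matI_n - \mat{\Lambda})\matU\transp$ is symmetric with eigenvalues $1 - \lambda_i(\matA) \in (0,2)$, hence SPD, so $\log(\matI_n - \matA)$ is well defined by the definition of the matrix logarithm given above and equals $\matU\,\mathrm{diag}\big(\log(1-\lambda_1(\matA)),\dots,\log(1-\lambda_n(\matA))\big)\,\matU\transp$.

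First I would apply the scalar identity entrywise along the diagonal: since $|\lambda_i(\matA)| < 1$ for every $i$, we have $\log(1-\lambda_i(\matA)) = -\sum_{k=1}^\infty \lambda_i(\matA)^k/k$, and therefore $\mathrm{diag}\big(\log(1-\lambda_i(\matA))\big) = -\sum_{k=1}^\infty \mat{\Lambda}^k/k$. This matrix series converges, e.g., in the spectral norm, because $\TNorm{\mat{\Lambda}^k/k} \le \rho^k/k$ with $\rho := \max_i |\lambda_i(\matA)| < 1$, and $\sum_k \rho^k/k < \infty$. Next I would conjugate by $\matU$: the linear map $\matX \mapsto \matU\matX\matU\transp$ is continuous on $\R^{n\times n}$ (indeed an isometry for both the spectral and Frobenius norms), so it commutes with the convergent sum, giving
$$
\log(\matI_n - \matA)
= \matU\Big(-\sum_{k=1}^\infty \mat{\Lambda}^k/k\Big)\matU\transp
= -\sum_{k=1}^\infty \frac{\matU \mat{\Lambda}^k \matU\transp}{k}
= -\sum_{k=1}^\infty \frac{\matA^k}{k},
$$
where the last step uses $\matU \mat{\Lambda}^k \matU\transp = (\matU \mat{\Lambda} \matU\transp)^k = \matA^k$ by orthogonality of $\matU$.

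The only genuinely nontrivial point — and the one I would state most carefully — is the interchange of the infinite series with the conjugation by $\matU$, equivalently the claim that the partial sums $-\sum_{k=1}^N \matA^k/k$ converge in matrix norm to $\log(\matI_n - \matA)$; this follows from the normwise geometric bound above together with continuity of linear maps on the fixed finite-dimensional space $\R^{n\times n}$, so no uniformity-in-$n$ issue arises. A purely cosmetic remark: the scalar expansion quoted above uses the natural logarithm, so if $\log$ denotes $\log_2$ throughout the paper then both sides carry the same factor $1/\ln 2$ and the argument is unaffected.
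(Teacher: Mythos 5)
Your proof is correct and is essentially the argument the paper itself relies on: the paper states this lemma without proof, immediately after quoting the scalar expansion $\ln(1-x)=-\sum_{k\ge1}x^k/k$, so the intended justification is exactly your spectral-theorem reduction (diagonalize, apply the scalar series to each eigenvalue, conjugate back, with convergence of the partial sums guaranteed by the geometric bound $\rho<1$). One small quibble with your closing remark: if $\log$ really meant $\log_2$ the stated identity would acquire a factor $1/\ln 2$ on the left but not on the right, so the lemma as written is only literally true for the natural logarithm --- this is a notational inconsistency of the paper rather than of your proof, and it does not affect your argument.
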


\subsection{Power method}\label{sec:power}
The first step in our algorithm for approximating the determinant of an SPD
matrix is to obtain an estimate for the largest eigenvalue of the matrix.  Given
an SPD matrix $\matA \in \R^{n \times n}$ we will use the
power-method (Algorithm~\ref{alg:power}) to obtain an accurate estimate of its largest eigenvalue. This
estimated eigenvalue is denoted by $\tilde{\lambda}_1(\matA)$.
\renewcommand\labelitemii{$\bullet$}
\begin{algorithm}[h!]
\begin{itemize}
\item Input: SPD matrix $\matA \in \R^{n \times n},$ integers $q,\ t > 0$
\item For $j=1,\dots,q$
\begin{enumerate}
\item Pick uniformly at random a vector $\x_0^j \in \{ +1, -1\}^{n} $
\item For $i=1,\dots,t$
\begin{itemize}
 \item $\x_i^j = \matA \cdot \x^j_{i-1}$
 \end{itemize}
\item Compute: $\tilde{\lambda}_1^j(\matA) =  \frac{{\x_t^j} \transp \matA \x_t^j}{{\x_t^j} \transp \x_t^j}$
 \end{enumerate}
\item Return: $\tilde{\lambda}_1(\matA) =  \max_{j=1\ldots q} \lambda_1^j$ (and the corresponding vector $\x_t=\x_t^j$)
\end{itemize}
\caption{Power method, repeated $q$ times.}\label{alg:power}
\end{algorithm}

\noindent Algorithm~\ref{alg:power} requires $\OO(qt (n + nnz(\matA)))$ arithmetic operations
to compute $\tilde{\lambda}_1(\matA)$. Lemma~\ref{lem:powerold} (see~\cite{LT_Lecture} for a proof)
argues that any $\tilde{\lambda}_1^j(\matA)$ is close to $\lambda_1(\matA)$.
\begin{lemma}\label{lem:powerold}
For any fixed $j=1\ldots q$, and for any $t > 0,$ $\eps > 0$, with probability at least $3/16$,
$$ \frac{(1 - \eps)}{1 + 4n(1-\eps)^{2t}}\lambda_1(\matA) \le
\frac{{\x_t^j} \transp \matA \x_t^j}{{\x_t^j} \transp \x_t^j}=\tilde{\lambda}_1^j(\matA).$$
\end{lemma}
\noindent Let $e = 2.718\ldots$ and let $\eps = 1-(1/e)$ and $t = \left\lceil\log \sqrt{4n}\right\rceil$; then, with probability at least $3/16$, for any fixed $j=1\ldots q$,
\begin{equation*}
\frac{1}{6}\lambda_1(\matA)\le\frac{1}{2e}\lambda_1(\matA) \le \tilde{\lambda}_1^j(\matA).
\end{equation*}
\noindent It is now easy to see that the largest value $\tilde{\lambda}_1(\matA)$ (and the corresponding vector $\x_t$) fails to satisfy the inequality $(1/6)\lambda_1(\matA) \le \tilde{\lambda}_1(\matA)$ with probability at most
$$\left(1-\frac{3}{16}\right)^q =\left(\frac{13}{16}\right)^q \leq \delta,$$
where the last inequality follows by setting $q = \left\lceil 4.82 \log(1/\delta)\right\rceil \geq \log(1/\delta)/\log(16/13)$. Finally, we note that, from the min-max principle, $\tilde{\lambda}_1(\matA) \leq \lambda_1(\matA)$. We summarize the above discussion in the following lemma.
\begin{lemma}\label{lem:power}
Let $\tilde{\lambda}_1(\matA)$ be the output of Algorithm~\ref{alg:power} with $q=\left\lceil 4.82 \log(1/\delta)\right\rceil$ and $t = \left\lceil\log \sqrt{4n}\right\rceil$. Then, with probability at least $1-\delta$,
$$\frac 1 6\lambda_1(\matA) \le \tilde{\lambda}_1(\matA) \leq \lambda_1(\matA).$$
The running time of Algorithm~\ref{alg:power} is $\OO\left(\left(n+nnz(\matA)\right)\log(n)\log\left(\frac{1}{\delta}\right)\right).$
\end{lemma}

\subsection{Trace estimation}\label{sxn:trace}

Even though computing the trace of a square $n \times n$ matrix requires only $O(n)$ arithmetic operations, the situation is more complicated when $\matA$ is given through a matrix function,
e.g., $\matA = \matX^2,$ for some matrix $\matX$ and the user only observes
$\matX$. For situations such as these, Avron and Toledo~\cite{AT11} analyzed several algorithms to estimate the trace of
$\matA$. Algorithm~\ref{alg:trace} and Lemma~\ref{thm:trace} present the relevant results from their paper.
\renewcommand\labelitemii{$\bullet$}
\begin{algorithm}[h!]
\begin{itemize}
\item Input: SPD matrix $\matA \in \R^{n \times n}$, accuracy $0 < \varepsilon < 1,$ and failure probability $0 < \delta < 1$.
\begin{enumerate}
\item Let $p = \left\lceil 20 \log(2/\delta) / \varepsilon^2 \right \rceil$
\item Let $\g_1,\g_2,\ldots, \g_p$ be a set of independent Gaussian vectors in $\RR^n$
\item Let $\gamma = 0$
\item For $i=1,\dots,p$
\begin{itemize}
 \item $\gamma = \gamma + \g_i^\top \matA \g_i$
 \end{itemize}
 \item $\gamma = \gamma / p$
 \end{enumerate}
\item Return: $\gamma$
\end{itemize}
\caption{Randomized Trace Estimation}\label{alg:trace}
\end{algorithm}
\begin{lemma}\label{thm:trace}
Let $\matA \in \R^{n \times n}$ be an SPD matrix, let $0<\eps < 1$ be an accuracy parameter, and let $0<\delta<1$ be a failure probability. If $\g_1,\g_2,\ldots, \g_p \in\RR^n$ are independent random standard Gaussian vectors, then, for $p= \left\lceil 20 \log(2/\delta) / \varepsilon^2 \right \rceil$, with probability at least $1 - \delta$,
\begin{equation*}\label{eq:trApprox}
\abs{
\trace{\matA} - \frac1{p} \sum_{i=1}^{p} \g_i^\top \matA \g_i
} \leq \eps \cdot \trace{\matA}.
\end{equation*}
\end{lemma}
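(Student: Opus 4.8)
The plan is to reduce the estimate to a concentration statement about a quadratic form in a single standard Gaussian vector, and then to average $p$ independent copies and invoke a Chernoff-type bound. First I would diagonalize: since $\matA$ is SPD, write $\matA = \matU\matSig\matU\transp$ with $\matSig = \mathrm{diag}(\lambda_1,\dots,\lambda_n)$ and all $\lambda_i > 0$. For a standard Gaussian $\g$, rotational invariance gives that $\matU\transp\g$ is again standard Gaussian, so $\g\transp\matA\g = \sum_{i=1}^n \lambda_i\, z_i^2$ where the $z_i$ are i.i.d.\ $N(0,1)$. Hence $\Expect{\g\transp\matA\g} = \sum_i \lambda_i = \trace{\matA}$, and the estimator $\frac1p\sum_{i=1}^p \g_i\transp\matA\g_i$ is an unbiased average of i.i.d.\ nonnegative random variables, each a weighted sum of $\chi^2_1$ variables with weights summing to $\trace{\matA}$.

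Next I would control the moment generating function of $X := \g\transp\matA\g = \sum_i \lambda_i z_i^2$. Using independence and the known MGF of a $\chi^2_1$ variable, for $|s|$ small enough (specifically $s\lambda_i < 1/2$ for all $i$),
\[
\Expect{e^{sX}} = \prod_{i=1}^n (1 - 2s\lambda_i)^{-1/2}.
\]
Taking logs and using the bound $-\tfrac12\log(1-u) \le \tfrac{u}{2} + \tfrac{u^2}{2}$ valid for, say, $|u| \le 1/2$, one gets $\log\Expect{e^{sX}} \le s\,\trace{\matA} + s^2 \sum_i \lambda_i^2 \le s\,\trace{\matA} + s^2\,(\trace{\matA})^2$, since $\sum_i\lambda_i^2 \le (\sum_i\lambda_i)^2$ for nonnegative $\lambda_i$. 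This is a sub-exponential tail bound for $X$ with the "variance proxy'' scaling like $(\trace{\matA})^2$; a symmetric estimate handles $-s$. Averaging $p$ independent copies multiplies the cumulant by $p$ at argument $s/p$, and a standard Markov/Chernoff optimization over $s$ then yields
\[
\ProbB{\abs{\tfrac1p\textstyle\sum_{i=1}^p \g_i\transp\matA\g_i - \trace{\matA}} \ge \eps\,\trace{\matA}} \le 2\,e^{-c p \eps^2}
\]
for an absolute constant $c$; tracking the constants through the $\chi^2$ MGF bound should produce $c = 1/20$ (or better) and the stated one-sided form $1 - e^{-p\eps^2/20}$, matching the lemma. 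Since this is quoted verbatim from \cite{AT11}, I would simply cite their Theorem~5.2 rather than re-derive the exact constant.

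The main obstacle is getting the constant right: the clean part is the diagonalization and the observation that the estimator is an unbiased average; the delicate part is choosing the right elementary bound on $-\log(1-u)$ over the correct range of $u$ so that the Chernoff optimization closes with the advertised $1/20$. One must also be slightly careful that the MGF is only finite for $s$ in a bounded interval (because $X$ is only sub-exponential, not sub-Gaussian), so the Chernoff argument needs the optimal $s$ to stay inside the admissible range $s\lambda_i < 1/2$ — this is automatic in the relevant regime $\eps < 1$ but should be checked. Beyond that, the argument is routine, and for the purposes of this paper it suffices to invoke \cite{AT11} directly.
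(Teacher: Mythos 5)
Your proposal is consistent with the paper: Lemma~\ref{thm:trace} is stated there as an imported result (Theorem~5.2 of \cite{AT11}) with no proof given, so invoking the citation directly, as you ultimately do, is exactly what the paper does. Your accompanying MGF sketch (diagonalize, reduce to $\sum_i \lambda_i z_i^2$, bound $-\tfrac12\log(1-2s\lambda_i)$, use $\sum_i\lambda_i^2\le(\trace{\matA})^2$, and Chernoff-optimize within the admissible range of $s$) is the standard and correct route to such a bound, up to the constant-tracking you already flag (e.g.\ the quadratic term should be $2s^2\sum_i\lambda_i^2$ rather than $s^2\sum_i\lambda_i^2$), which does not affect the validity of the stated form.
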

\noindent The above lemma is immediate from Theorem~5.2 in~\cite{AT11}. 

\section{Additive error approximation for general SPD matrices}\label{sec:main}
Lemma~\ref{lem1} is the starting point of our main algorithm for approximating the determinant of a symmetric positive definite matrix. The message in the lemma is that
computing the log determinant of an SPD matrix $\matA$ reduces to the
task of computing the largest eigenvalue of $\matA$ and the trace of all the
powers of a matrix $\matC$ related to $\matA$.
\begin{lemma}\label{lem1}
Let $\matA \in \R^{n \times n}$ be an SPD matrix. For any $\alpha$ with
$\lambda_1(\matA) < \alpha,$
define
$\matB := \matA / \alpha$ and $\matC := \matI_n - \matB.$
Then,
\begin{equation*}
\logdet{\matA} = n \log(\alpha) -\sum_{k=1}^{\infty} \frac{\trace{\matC^k}}{k}.
\end{equation*}
\begin{proof}
Observe that $\matB$ is an SPD matrix with $\TNorm{\matB}<1$. It follows that
\begin{align*}
\logdet{\matA}
&= \log( \alpha^n \det(\matA /\alpha)) \\
&= n\log(\alpha) + \log \left(\prod_{i=1}^{n} \lambda_i(\matB)\right) \\
&= n\log(\alpha)  + \sum_{i=1}^{n}\log (\lambda_i(\matB)) \\
&= n\log(\alpha)  +  \trace{ \logm{\matB}}.
\end{align*}
Here, we used standard properties of the determinant, standard properties of the logarithm function, and the fact that (recall that $\matB$ is an SPD matrix),
$$\trace{\logm{\matB}} = \sum_{i=1}^{n} \lambda_i(\logm{\matB}) = \sum_{i=1}^{n} \log(\lambda_i(\matB)).$$
\noindent Now,
\begin{equation}\label{eq:trlog}
	\trace{ \logm{\matB}}
	= \trace{ \logm{\Id_n - (\Id_n - \matB)}}
	= \trace{ -\sum_{k=1}^{\infty}  \frac{\left(\Id_n - \matB\right)^k}{k}}
	 = -\sum_{k=1}^{\infty} \frac{\trace{ \matC^k}}{k}.
\end{equation}
The second equality follows by the Taylor expansion because all the
eigenvalues of $\matC = \Id_n - \matB$ are contained\footnote{Indeed,
$\lambda_i(\matC) = 1 - \lambda_i(\matB)$ and $0<\lambda_i(\matB) < 1 $ for all
$i=1\ldots n$.} in $(0,1)$ and the last equality follows by the linearity of the
trace operator.
\end{proof}

\end{lemma}

\subsection{Algorithm}
Lemma~\ref{lem1} indicates the following high-level
procedure for computing the logdet of an SPD matrix $\matA$:
\begin{enumerate}
\item Compute some $\alpha$ with $\lambda_1(\matA) < \alpha$.
\item Compute $\matC =  \matI_n - \matA / \alpha$.
\item Compute the trace of \emph{all} the powers of $\matC$.
\end{enumerate}

\noindent To implement the first step in this procedure we use the power iteration from the numerical linear algebra literature~(see Section~\ref{sec:power}). The second step is straightforward. To implement the third step,
we keep a finite number of summands in the expansion $\sum_{k=1}^{\infty} \trace{ \matC^k}$.
This step is important since the quality of the approximation, both theoretically and empirically, depends on the number of summands (denoted with $m$) that will be kept. On the other hand, the running time of the algorithm increases with $m$.  Finally, to estimate the traces of the powers of $\matC$, we use the randomized algorithm of Section~\ref{sxn:trace}. Our approach is described in detail in Algorithm~\ref{alg1a}; notice that step $7$ in Algorithm~\ref{alg1a} is an efficient way of
computing
$$
	\alogdet{\matA} := n\log(\alpha)- \sum_{k=1}^{m}
	\left(
	\frac{1}{p} \sum_{i=1}^{p}    \g_i^\top \matC^k \g_i
	\right)/k.
$$
\begin{algorithm}[h!]
\begin{algorithmic}[1]
\STATE {\bf{INPUT}}: $\matA \in \R^{n \times n}$, accuracy parameter $\eps > 0$, and integer $m >0$.
\STATE Compute $\tilde{\lambda}_1(\matA)$ using Algorithm~\ref{alg:power} with (integers)$t = \OO\left(\log n\right)$ and $q = \OO\left(\log(1/\delta)\right)$
\STATE Pick $\alpha = 7 \tilde{\lambda}_1(\matA)$
\STATE Set $\matC = \matI_n - \matA / \alpha$
\STATE Set $p = \left\lceil 20 \log(2/\delta) / \varepsilon^2 \right \rceil$
\STATE Let $\g_1,\g_2,\ldots, \g_p \in \RR^n$ be i.i.d. random Gaussian vectors.
\STATE For {$i=1,2\ldots, p$}
           \begin{itemize}
           \item  $\v_1^{(i)} = \matC \g_i$ and $\gamma_1^{(i)} = \g_i^\top \v_1^{(i)}$
           \item For {$k=2,\ldots , m$}
           \begin{enumerate}
           \item $\v_k^{(i)} : = \matC \v_{k-1}^{(i)}$.
           \item $\gamma_k^{(i)} = \g_i^\top \v_k^{(i)}$\
                    \text{(Inductively $\gamma_{k}^{(i)} = \g_i^\top \matC^k \g_i$)}
           \end{enumerate}
           \item EndFor
           \end{itemize}
\STATE EndFor
\STATE{\bf{OUTPUT}:
$\alogdet{\matA} = n\log(\alpha) - \sum_{k=1}^{m} \left(\frac1{p}\sum_{i=1}^{p} \gamma_k^{(i)} \right) / k$
}
\end{algorithmic}
\caption{\small{Randomized Log Determinant Estimation}}
\label{alg1a}
\end{algorithm}
\subsection{Error bound}
The following lemma proves that Algorithm~\ref{alg1a} returns an accurate approximation to the logdet of $\matA$.
\begin{lemma}\label{thm1}
Let $\alogdet{\matA}$ be the output of Algorithm~\ref{alg1a} on inputs $\matA,$ $m,$ and $\varepsilon$.
Then, with probability at least $1-2\delta$,
$$
\abs{ \alogdet{\matA} - \logdet{\matA} }  \leq
\left(\epsilon+\left(1-\frac{1}{7\kappa\left(\matA \right)} \right)^m\right) \cdot \Gamma,
$$
where
$
\Gamma = \sum_{i=1}^n  \log\left( 7 \cdot \frac{\lambda_1(\matA)}{ \lambda_i(\matA) }\right).
$
If $m \ge \left\lceil7\kappa\left(\matA \right)\log\left(\frac{1}{\varepsilon}\right)\right\rceil$, then
$$
\abs{ \alogdet{\matA} - \logdet{\matA} }  \leq 2\epsilon \Gamma.
$$
\end{lemma}

\begin{proof}
First, note that using our choice for $\alpha$ in Step 3 of Algorithm~\ref{alg1a} and applying Lemma~\ref{lem:power}, we get that, with probability at least $1-\delta$,
\begin{equation}\label{eqn:pd11}
\lambda_1(\matA) < \frac 7 6 \lambda_1(\matA) \leq \alpha \le 7  \lambda_1(\matA),
\end{equation}
The strick inequality at the leftmost side of the above equation follows since all eigenvalues of $\matA$ are strictly positive. Let's call the event that the above inequality holds ${\cal E}_1$; obviously, $\ProbB{{\cal E}_1} \geq 1-\delta$ (and thus $\ProbB{\bar{\cal E}_1}\leq \delta$). We condition all further derivations on ${\cal E}_1$ holding and we manipulate $\Delta=\abs{\alogdet{\matA} -
\logdet{\matA}}$ as follows:
\begin{align*}
\Delta&=  \abs{  \sum_{k=1}^{m} \left( \frac{1}{p} \sum_{i=1}^{p}    \g_i^\top \matC^k \g_i  \right)/k - \sum_{k=1}^{\infty} \trace{\matC^k} / k}    \\
&\le
\abs{  \sum_{k=1}^{m}
\left(
\frac{1}{p} \sum_{i=1}^{p}    \g_i^\top \matC^k \g_i
\right)/k -\sum_{k=1}^{m} \trace{\matC^k}/k }
+
\abs{ \sum_{k=m+1}^{\infty} \trace{\matC^k}/k   } \\
&=
\underbrace{\abs{  \frac{1}{p} \sum_{i=1}^{p}
   \g_i^\top \left(\sum_{k=1}^{m} \matC^k/k  \right) \g_i  - \trace{\sum_{k=1}^{m} \matC^k/k }}}_{\Delta_1}
+
\underbrace{\abs{ \sum_{k=m+1}^{\infty} \trace{\matC^k}/k}}_{\Delta_2}.
\end{align*}

\noindent Below, we bound the two terms $\Delta_1$ and $\Delta_2$ separately. We start with $\Delta_1$:
the idea is to apply Lemma~\ref{thm:trace} on the matrix $\sum_{k=1}^{m} \matC^k/k$ with
$p =\left\lceil 20 \log(2/\delta) / \eps^2
\right\rceil$. Let ${\cal E}_2$ denote the probability that Lemma~\ref{thm:trace} holds; obviously, $\ProbB{{\cal E}_2}\geq 1-\delta$ (and thus $\ProbB{\bar{\cal E}_2}\leq \delta$) given our choice of $p$. We condition all further derivations on ${\cal E}_2$ holding as well to get
$$
\Delta_1 \le \eps \cdot \trace{\sum_{k=1}^{m} \matC^k/k} \le \eps \cdot \trace{\sum_{k=1}^{\infty} \matC^k/k}.
$$
In the last inequality we used the fact that $\matC$ is a positive matrix, hence for all $k$, $\trace{\matC^k} > 0$.
The second term $\Delta_2$ is bounded as follows:
\begin{align*}
\Delta_2 &= \abs{\sum_{k=m+1}^{\infty} \trace{\matC^k}/k }
\le \sum_{k=m+1}^{\infty} \trace{\matC^k}/k  \\
&= \sum_{k=m+1}^{\infty} \trace{\matC^m \cdot \matC^{k-m}}/k
\le \sum_{k=m+1}^{\infty} \TNorm{\matC^m} \cdot \trace{ \matC^{k-m}}/k  \\
&= \TNorm{\matC^m} \cdot  \sum_{k=m+1}^{\infty}\trace{ \matC^{k-m}}/k
\le \TNorm{\matC^{m}} \cdot \sum_{k=1}^{\infty} \trace{\matC^k}/k  \\
&\le \left(1-\frac{\lambda_{n}\left(\matA\right)}{\alpha}\right)^m \cdot \sum_{k=1}^{\infty} \trace{\matC^k}/k.
\end{align*}
In the first inequality, we used the triangle inequality and the fact that $\matC$ is a positive matrix.
In the second inequality, we used the following fact\footnote{This follows from Von Neumann's trace inequality.}: given two positive semidefinite matrices
$\matA,\matB$ of the same size,
$
\trace{\matA \matB} \leq \TNorm{\matA} \cdot \trace{\matB}.
$
In the last inequality, we used the fact that
$$
\lambda_{1}(\matC) = 1 - \lambda_{n}(\matB) = 1 - \lambda_{n}(\matA) /\alpha.
$$
Combining the bounds for $\Delta_1$ and $\Delta_2$ gives
\begin{align*}
\abs{ \alogdet{\matA} - \logdet{\matA} } & \leq %\\
\left(\epsilon+\left(1-\frac{\lambda_{n}\left(\matA\right)}{\alpha}\right)^m\right) \cdot \sum_{k=1}^{\infty} \frac{\trace{\matC^k}}{k}.
\end{align*}
We have already proven in Lemma~\ref{lem1} that
$$ \sum_{k=1}^{\infty} \frac{\trace{\matC^k}}{k} = - \trace{\logm{\matB}} = n \log(a) - \logdet{\matA}.$$
Notice that the assumption of Lemma~\ref{lem1} (namely, $\lambda_1(\matA) < \alpha$) is satisfied from the inequality of eqn.~(\ref{eqn:pd11}). We further manipulate the last term as follows:
\begin{eqnarray*}
n \log(a) - \logdet{\matA}
&=& n \log(\alpha) - \log ( \prod_{i=1}^n \lambda_i(\matA)) \\
&=& n \log(\alpha) - \sum_{i=1}^n \log (\lambda_i(\matA)) \\
&=&  \sum_{i=1}^n \left( \log\left(\alpha\right) - \log\left(\lambda_i(\matA) \right) \right) \\
&=&  \sum_{i=1}^n  \log\left( \frac{\alpha}{ \lambda_i(\matA) }\right).
\end{eqnarray*}
Collecting our results together, we get:
\begin{equation*}
\abs{ \alogdet{\matA} - \logdet{\matA} }
\leq
\left(\epsilon+\left(1-\frac{\lambda_{n}\left(\matA\right)}{\alpha}\right)^m\right) \cdot  \sum_{i=1}^n  \log\left( \frac{\alpha}{ \lambda_i(\matA) }\right) .
\end{equation*}
Using the inequality of eqn.~(\ref{eqn:pd11}) (only the upper bound on $\alpha$ is needed here) proves the first inequality of the lemma. To prove the second inequality, we use the well-known fact that $\left(1-x^{-1}\right)^x \leq e^{-1}$ (where $e=2.718\ldots$ and $x>0$) and our choice for $m$.

Finally, recall that we conditioned all derivations on events ${\cal E}_1$ and ${\cal E}_2$ both holding, which can be bounded as follows:
$$\ProbB{{\cal E}_1\cap{\cal E}_2} = 1-\ProbB{\bar{\cal E}_1\cup\bar{\cal E}_2}\geq 1-\ProbB{\bar{\cal E}_1}-
\ProbB{\bar{\cal E}_1}\geq 1-2\delta.$$
The first inequality in the above derivation follows from the union bound.
\end{proof}

\subsection{Running time} Step 2 takes $\OO(nnz(\matA)\log(n)\log(1/\delta))$ time; we assume that $nnz(\matA)\geq n$, since otherwise the determinant of $\matA$ would be trivially equal to zero.
For each $k>0$, $\v_k = \matC^{k} \g_i$. The algorithm inductively computes $\v_k$ and $\g_i^\top \matC^{k}\g_i = \g_i^\top \v_k$ for all $k=1,2,\ldots, m$. Given $\v_{k-1}$, $\v_{k}$  and $\g_i^\top \matC^{k} \g_i$ can be computed in $nnz(\matC)$ and $\OO(n)$ time, respectively. Notice that $nnz(\matC) \leq n + \nnz(\matA)$. Therefore, step 7 requires
$
\OO( p\cdot m \cdot \nnz(\matA))
$
time. Since $p = O(\eps^{-2}\log(1/\delta)),$ the total cost is
$$
\OO\left( \nnz(\matA) \cdot \left(\frac{m}{\eps^{2}}+\log n\right)\cdot \log\left(\frac{1}{\delta}\right)\right).
$$

\section{Relative error approximation for SPD matrices with bounded eigenvalues}

In this section, we argue that a simplified version of Algorithm~\ref{alg1a} achieves a relative error approximation to the logdet of the SPD matrix $\matA$, under the assumption that all the eigenvalues of $\matA$ lie in the interval $(\theta_1,1)$, where $0< \theta_1 < 1$. This is a mild generalization of the setting introduced in~\cite{icml2015_hana15}.

Given the upper bound on the largest eigenvalue of $\matA$, the proof of the following lemma (which is the analog of Lemma~\ref{lem1}) is straightforward.

\begin{lemma}\label{lem1a}
Let $\matA \in \R^{n \times n}$ be an SPD matrix whose eigenvalues lie in the interval $(\theta_1,1)$, for some $0< \theta_1 < 1$. Let $\matC := \matI_n - \matA$; then,
\begin{equation*}
\logdet{\matA} =  -\sum_{k=1}^{\infty} \frac{\trace{\matC^k}}{k}.
\end{equation*}
\begin{proof}
Similarly to the proof of Lemma~\ref{lem1},
\begin{equation*}
\logdet{\matA}
= \log \left(\prod_{i=1}^{n} \lambda_i(\matA)\right)
= \sum_{i=1}^{n}\log (\lambda_i(\matA))
= \trace{ \logm{\matA}}.
\end{equation*}
\noindent Now,
\begin{equation*}
\trace{ \logm{\matA}}
= \trace{ \logm{\Id_n - (\Id_n - \matA)}}
= \trace{ -\sum_{k=1}^{\infty}  \frac{\left(\Id_n - \matA\right)^k}{k}} = -\sum_{k=1}^{\infty} \frac{\trace{ \matC^k}}{k}.
\end{equation*}
The second equality follows by the Taylor expansion since all the
eigenvalues of $\matC = \Id_n - \matA$ are contained in the interval $(0,1)$.
\end{proof}
\end{lemma}

\subsection{The algorithm and the relative error bound}

We simplify Algorithm~\ref{alg1a} as follows: we skip steps 2 and 3 and in step 4 we set $\matC = \matI_n-\matA$. The following lemma proves that in this special case the modified algorithm returns a relative error approximation to the log determinant of the input matrix $\matA$.
\begin{lemma}\label{thm2}
	Let $\alogdet{\matA}$ be the output of the (modified) Algorithm~\ref{alg1a} on inputs $\matA$ and $\varepsilon$.
	Then, with probability at least $1-\delta$,
	$$
	\abs{ \alogdet{\matA} - \logdet{\matA} }  \leq
	2\varepsilon \cdot \abs{\logdet{\matA}}.$$
\end{lemma}

\begin{proof}
	Similarly to the proof of Lemma~\ref{thm1}, we manipulate $\Delta=\abs{\alogdet{\matA} -
		\logdet{\matA}}$ as follows:
	\begin{align*}
	\Delta&=  \abs{  \sum_{k=1}^{m} \left( \frac{1}{p} \sum_{i=1}^{p}    \g_i^\top \matC^k \g_i  \right)/k - \sum_{k=1}^{\infty} \trace{\matC^k} / k}    \\
	&\le
	\abs{  \sum_{k=1}^{m}
		\left(
		\frac{1}{p} \sum_{i=1}^{p}    \g_i^\top \matC^k \g_i
		\right)/k -\sum_{k=1}^{m} \trace{\matC^k}/k }
	+
	\abs{ \sum_{k=m+1}^{\infty} \trace{\matC^k}/k   } \\
	&=
	\underbrace{\abs{  \frac{1}{p} \sum_{i=1}^{p}
			\g_i^\top \left(\sum_{k=1}^{m} \matC^k/k  \right) \g_i  - \trace{\sum_{k=1}^{m} \matC^k/k }}}_{\Delta_1}
	+
	\underbrace{\abs{ \sum_{k=m+1}^{\infty} \trace{\matC^k}/k}}_{\Delta_2}.
	\end{align*}
	\noindent We now bound the two terms $\Delta_1$ and $\Delta_2$ separately. We start with $\Delta_1$:
	the idea is to apply Lemma~\ref{thm:trace} on the matrix $\sum_{k=1}^{m} \matC^k/k$ with
	$p = \left\lceil 20 \log(2/\delta) / \varepsilon^2 \right \rceil$. Hence, with probability at least $1-\delta$ (this is the only probabilistic event in this lemma and hence $1-\delta$ is a lower bound on the success probability of the lemma):
	$$
	\Delta_1 \le \eps \cdot \trace{\sum_{k=1}^{m} \matC^k/k} \le \eps \cdot \trace{\sum_{k=1}^{\infty} \matC^k/k}.
	$$
	\normalsize
	In the last inequality we used the fact that $\matC$ is a positive definite matrix, hence for all $k$, $\trace{\matC^k} > 0$.
	Bounding $\Delta_2$ follows the lines of the proof of Lemma~\ref{thm1}:
	\begin{align*}
	\Delta_2 &= \abs{\sum_{k=m+1}^{\infty} \trace{\matC^k}/k }
	= \abs{\sum_{k=m+1}^{\infty} \trace{\matC^m \cdot \matC^{k-m}}/k}\\
	&\le \abs{\sum_{k=m+1}^{\infty} \TNorm{\matC^m} \cdot \trace{ \matC^{k-m}}/k}
	= \TNorm{\matC^m} \cdot \abs{\sum_{k=m+1}^{\infty}\trace{ \matC^{k-m}}/k}\\
	&\le \TNorm{\matC^{m}} \cdot \abs{\sum_{k=1}^{\infty} \trace{\matC^k}/k }
	\le \left(1-\lambda_{n}\left(\matA\right)\right)^m \abs{\sum_{k=1}^{\infty} \trace{\matC^k}/k}.
	\end{align*}
	\normalsize
	In the last inequality, we used the fact that $\lambda_{1}(\matC) =  1 - \lambda_{n}(\matA).$
	Combining the bounds for $\Delta_1$ and $\Delta_2$ gives
	\begin{align*}
	\abs{ \alogdet{\matA} - \logdet{\matA} } & \leq %\\
	\left(\epsilon+\left(1-\lambda_{n}\left(\matA\right)\right)^m\right) \cdot \sum_{k=1}^{\infty} \frac{\trace{\matC^k}}{k}.
	\end{align*}
	\noindent We have already proven in Lemma~\ref{lem1a} that
	$$ \sum_{k=1}^{\infty} \frac{\trace{\matC^k}}{k} = - \trace{\logm{\matA}} =  - \logdet{\matA}.$$
	\noindent Collecting our results, we get:
	\begin{align*}
	\abs{ \alogdet{\matA} - \logdet{\matA} }
	\leq
	\left(\varepsilon+\left(1-\lambda_{n}\left(\matA\right)\right)^m\right) \cdot  \abs{\logdet{\matA}}.
	\end{align*}
	Using
	$
	1-\lambda_n(\matA) < 1-\theta_1,
	$
	we conclude that
		\begin{align*}
		\abs{ \alogdet{\matA} - \logdet{\matA} }
		\leq
		\left(\varepsilon+\left(1-\theta_1\right)^m\right) \cdot  \abs{\logdet{\matA}}.
		\end{align*}
Setting
$$m=\left\lceil\frac{1}{\theta_1}\cdot\log{\left(\frac{1}{\varepsilon}\right)}\right\rceil$$
and using $\left(1-x^{-1}\right)^x \leq e^{-1}$ (where $e=2.718\ldots$ and $x>0$),
guarantees that $(1-\theta_1)^m \leq \varepsilon$ and concludes the proof of the lemma.
\end{proof}

\noindent We conclude by discussing the running time of the simplified Algorithm~\ref{alg1a}, which is equal to $\OO(p\cdot m\cdot nnz(\matA) )$. Since $p = \OO\left(\frac{\log(1/\delta)}{\varepsilon^{2}}\right)$ and $m=\OO\left(\frac{\log(1/\varepsilon)}{\theta_1}\right)$, the running time becomes $$\OO\left(\frac{\log(1/\varepsilon)\log(1/\delta)}{\varepsilon^2\theta_1} nnz(\matA)\right).$$ 

\section{Experiments}\label{sxn:exps}

The goal of our experimental section is to establish that our approximation
to $\logdet{\matA}$ (as computed by Algorithm~\ref{alg1a}) is both accurate and fast for both
dense and sparse matrices.
The accuracy of Algorithm~\ref{alg1a} is measured by comparing its result
against the exact $\logdet{\matA}$ computed via the Cholesky factorization.
The rest of this section is organized as follows:
in Section~\ref{subsec:software}, we describe our software for approximating
$\logdet{\matA}$; in Section~\ref{subsec:environment} we describe the computational environment that we used; and
in Sections~\ref{subsec:dense_matrices} and~\ref{subsec:sparse_matrices} we
discuss experimental results for dense and sparse SPD matrices, respectively.

\subsection{Software}
\label{subsec:software}
We developed high-quality, shared- and distributed-memory parallel \Cpp{}
code for the algorithms listed in this paper.
All of the code that was developed for this paper is hosted at \url{https://github.com/pkambadu/ApproxLogDet}.
In it's current state, our software supports:
(1) ingesting dense (binary and text format) and sparse (binary,
text, and matrix market format) matrices,
(2) generating large random SPD matrices,
(3) computing both approximate and exact spectral norms of matrices,
(4) computing both approximate and exact traces of matrices, and
(5) computing both approximate and exact log determinants of matrices.
Currently, we support both Eigen~\cite{eigenweb} and
Elemental~\cite{poulson2013elemental} matrices.
The Eigen software package supports both dense and sparse matrices, while the Elemental software package mostly
supports dense matrices and only recently added support for sparse matrices
(pre-release).
As we wanted the random SPD generation to be fast, we have used parallel
random number generators from Random123~\cite{salmon2011parallel} in
conjunction with Boost.Random.

\subsection{Environment}
\label{subsec:environment}
All our experiments were run on ``Nadal'', a 60-core machine, where each
core is an Intel\textregistered{} Xeon\textregistered{} E7-4890 machine running
at 2.8 Ghz.
Nadal has 1 TB of RAM and runs Linux kernel version 2.6-32.
For compilation, we used GCC 4.9.2.
We used Eigen 3.2.4, OpenMPI 1.8.4, Boost 1.55.7, and the latest version of
Elemental at {\small\url{https://github.com/elemental}}.
For experiments with Elemental, we used OpenBlas, which is an extension of
GotoBlas~\cite{goto2008high}, for its parallel prowess; Eigen has built-in the
BLAS and LAPACK packages.

\subsection{Dense Matrices}
\label{subsec:dense_matrices}
\vspace{0.01in}\noindent \textbf{Data Generation.}
In our experiments, we used two types of synthetic SPD matrices. The first
type were diagonally dominant SPD matrices and were generated as follows.
First, we created $\matX\in{}\mathbb{R}^{n\times{}n}$ by drawing $n^2$ entries from a
uniform sphere with center 0.5 and radius 0.25.
Then, we generated a symmetric matrix $\matY$ by setting
$$\matY=0.5*(\matX+\matX^\top).$$
Finally, we ensured that the desired matrix $\matA$ is positive definite by adding the value $n$ to
each diagonal entry~\cite{curran2009variation} of $\matY$:
$
\matA = \matY + n \matI_n.
$
We call this method~\code{randSPDDenseDD}.

The second approach generates SPD matrices that are not diagonally dominant.
We created $\matX,\matD \in{} \mathbb{R}^{n\times{}n}$ by drawing $n^2$ and $n$
entries, respectively, from a uniform sphere with center 0.5 and radius 0.25;
$\matD$ is a diagonal matrix with small entries.
Next, we generated an orthogonal random matrix $\matQ = \qr{\matX}$. Thus,
$\matQ$ is an orthonormal basis for $\matX$.
Finally, we generated
$
\matA = \matQ\matD{}\matQ^T.
$
We call this method \code{randSPDDense}.
\code{randSPDDense} is more expensive than \code{randSPDDenseDD}, as it requires
an additional $O(n^3)$ computations for the QR factorization and the
matrix-matrix product.

\vspace{0.02in}\noindent \textbf{Evaluation.}
To evaluate the runtime of Algorithm~\ref{alg1a} against a baseline, we used the Cholesky decomposition
to compute the $\logdet{\matA}$. More specifically, we computed $\matA = \matL\matL^T$ and
returned $\logdet{\matA} = 2\cdot\logdet{\matL}$.
Since Elemental provides distributed and shared memory parallelism, we restricted
ourselves to experiments with Elemental matrices throughout this section.
Note that we measured the accuracy of the approximate algorithm in terms of the
relative error to ensure that we have numbers of the same scale for matrices
with vastly different values for $\logdet{\matA}$;
we defined the relative error $e$ as $e =
100(x-\tilde{x})/x$, where $x$ is the true value and
$\tilde{x}$ is the approximation.
Similarly, we defined the speedup $s$ as $s = t_x/t_{\tilde{x}}$,
where $t_x$ is the time needed to compute $x$ and $t_{\tilde{x}}$ is the time needed to
compute the approximation $\tilde{x}$.

\vspace{0.02in}\noindent \textbf{Results.}
For dense matrices, we first used synthetic matrices generated using~\code{randSPDDense}; these are relatively ill-conditioned matrices. We experimented with values of $n$ (number of rows and columns of $\matA$) in the set $\{5,000,\ 7,500,\ 10,000,\ 12,500,\ 15,000\}$.
The three key points pertaining to these matrices are shown in
Figure~\ref{fig:dense}.
First, we discuss the effect of $m$, the number of terms in the Taylor series
used to approximate $\logdet{\matA}$; Figure~\ref{fig:dense-accuracy} depicts
our results for the sequential case.
On the $y$-axis, we see the relative error, which is measured against the exact
$\logdet{\matA}$ as computed via the Cholesky factorization.
We observe that for these ill-conditioned matrices, for small values of $m$ (less than four)
the relative error is high.
However, for all values of $m\geq 4$, we observe that the error drops
significantly and stabilizes.
We note that in each iteration, all random processes were re-seeded with new
values; we have plotted the error bars throughout Figure~\ref{fig:dense}.
The standard deviation for both accuracy and time was consistently
small; indeed, it is not visible to the naked eye at scale.
To see the benefit of approximation, we look at
Figure~\ref{fig:dense-speedup-m} together with Figure~\ref{fig:dense-accuracy}.
For example, at $m=4$, for all matrices, we get at least a factor of two speedup.
As $n$ gets larger, the speedups of the
approximation also increase.
For example, for $n=15,000$, the speedup at $m=4$ is nearly six-fold. In terms of accuracy,
Figure~\ref{fig:dense-accuracy} shows that at $m=4$, the relative error is approximately
$4\%$.
This speedup is expected as the Cholesky factorization requires $O(n^3)$
operations; Algorithm~\ref{alg1a} only relies on matrix-matrix products where one
of the matrices has a small number of columns (equal to $p$), which is
independent of $n$.
\begin{figure*}[t]
\begin{center}
\subfigure[Accuracy vs. $m$]{%
\includegraphics[width=0.32\textwidth]{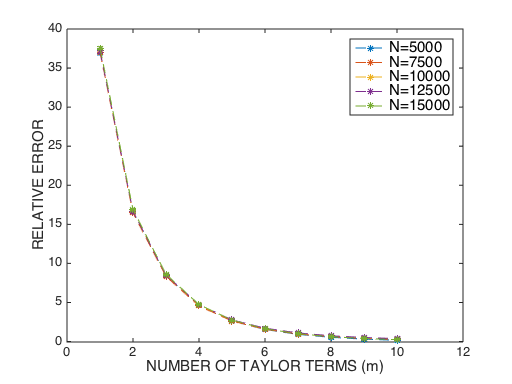}
\label{fig:dense-accuracy}
}
\subfigure[Speedup vs. $m$]{%
\includegraphics[width=0.32\textwidth]{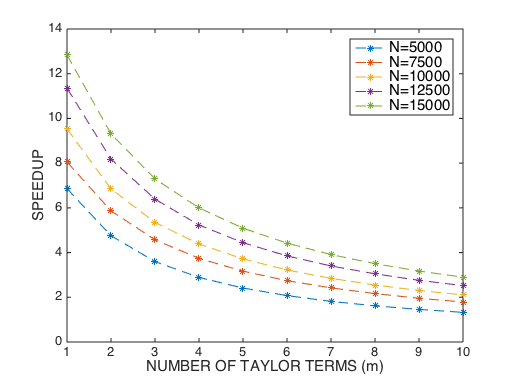}
\label{fig:dense-speedup-m}
}
\subfigure[Parallel Speedup]{%
\includegraphics[width=0.32\textwidth]{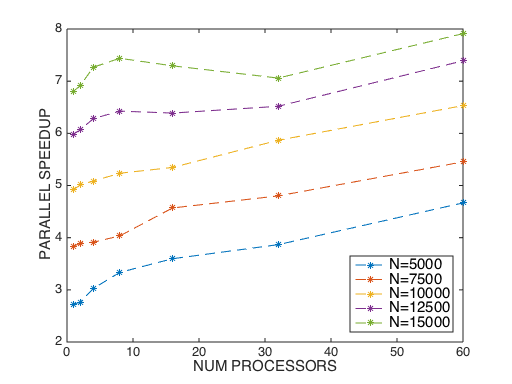}
\label{fig:dense-speedup-np}
}
\end{center}
\caption{
Panels~\ref{fig:dense-accuracy} and~\ref{fig:dense-speedup-m} depict the effect
of $m$ (see Algorithm~\ref{alg1a}) on the accuracy of the approximation and the
time to completion, respectively, for dense matrices generated by
\code{randSPDDense}.
For all the panels, $p=60$ and $t=2\log\sqrt{4n}$.
The baseline for all experiments was the Cholesky factorization, which was used
to compute the exact value of $\logdet{\matA}$.
For panels~\ref{fig:dense-accuracy} and~\ref{fig:dense-speedup-m}, the number
of cores, $np$, was set to one.
The last panel~\ref{fig:dense-speedup-np} depicts the relative speedup of
the approximate algorithm when compared to the baseline solver (at $m=4$).
Elemental was used as the backend for these experiments.
For the approximate algorithm, we report the mean and standard deviation
of ten iterations.
}
\label{fig:dense}
\end{figure*}
\begin{figure*}[t]
\begin{center}
\subfigure[Accuracy vs. $m$]{%
\includegraphics[width=0.32\textwidth]{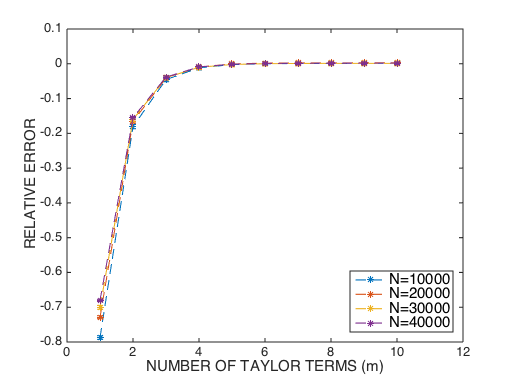}
\label{fig:dense-dd-accuracy}
}
\subfigure[Speedup vs. $m$]{%
\includegraphics[width=0.32\textwidth]{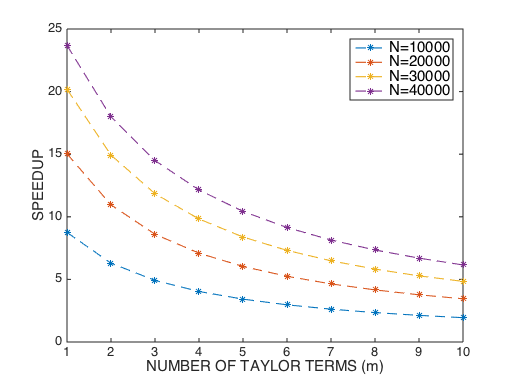}
\label{fig:dense-dd-speedup-m}
}
\subfigure[Parallel Speedup]{%
\includegraphics[width=0.32\textwidth]{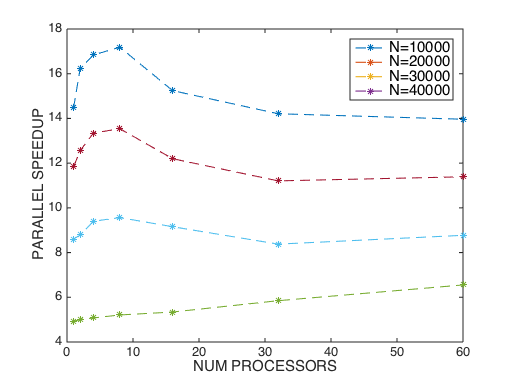}
\label{fig:dense-dd-speedup-np}
}
\end{center}
\caption{
Panels~\ref{fig:dense-dd-accuracy} and~\ref{fig:dense-dd-speedup-m} depict the
effect of $m$ (see Algorithm~\ref{alg1a}) on the accuracy of the approximation
and the time to completion, respectively, for diagonally dominant dense random
matrices generated by \code{randSPDDenseDD}.
For all the panels, $p=60$ and $t=2\log\sqrt{4n}$.
The baseline for all experiments was the Cholesky factorization, which was used
to compute the exact value of $\logdet{\matA}$.
For panels~\ref{fig:dense-dd-accuracy} and~\ref{fig:dense-dd-speedup-m}, the
number of cores, $np$, was set to one.
The last panel~\ref{fig:dense-dd-speedup-np} depicts the relative speedup of
the approximate algorithm when compared to the baseline solver (at $m=2$).
Elemental was used as the backend for these experiments.
For the approximate algorithm, we report the mean and standard deviation
over ten iterations.
}
\label{fig:dense-dd}
\end{figure*}
\begin{table}
\center
\scriptsize
\begin{tabular}{|c|c|c|c|c|c|c|}
\hline
\multirow{2}{*}{$n$}  &
\multicolumn{3}{|c|}{$\logdet{\matA}$} &
\multicolumn{3}{|c|}{time (secs)} \\
\cline{2-7}
      &   exact    & mean       & std   &   exact  & mean  &  std   \\\hline
5000  & -3717.89  & -3546.920  &  8.10  &   2.56   &  1.15 & 0.0005 \\\hline
7500  & -5474.49  & -5225.152  &  8.73  &   7.98   &  2.53 & 0.0015 \\\hline
10000 & -7347.33  & -7003.086  &  7.79  &  18.07   &  4.47 & 0.0006 \\\hline
12500 & -9167.47  & -8734.956  & 17.43  &  34.39   &  7.00 & 0.0030 \\\hline
15000 & -11100.9  & -10575.16  & 15.09  &  58.28   & 10.39 & 0.0102 \\\hline
\end{tabular}
\caption{
Accuracy and sequential running times (at $p=60$, $m=4$ and $t=\log\sqrt{4n}$) for dense
random matrices generated using \code{randSPDDense}.
Baselines were computed using the Cholesky factorization; mean and standard
deviation are reported over ten iterations.
}
\label{tbl:dense-abs}
\normalsize
\end{table}
\begin{table}
\center
\scriptsize
\begin{tabular}{|c|c|c|c|c|c|c|}
\hline
\multirow{2}{*}{$n$}  &
\multicolumn{3}{|c|}{$\logdet{\matA}$} &
\multicolumn{3}{|c|}{time (secs)} \\
\cline{2-7}
      &   exact   & mean     & std    &   exact  & mean  &  std   \\\hline
10000 &  92103.1  &  92269.5 &  5.51  &   18.09  &  2.87 &  0.01  \\\hline
20000 & 198069.0  & 198397.4 &  9.60  &  135.92  & 12.41 &  0.02  \\\hline
30000 & 309268.0  & 309763.8 & 20.04  &  448.02  & 30.00 &  0.12  \\\hline
40000 & 423865.0  & 424522.4 & 14.80  & 1043.74  & 58.05 &  0.05  \\\hline
\end{tabular}
\caption{
Accuracy and sequential running times (at $p=60$, $m=2$, and $t=2\log\sqrt{4n}$) for
diagonally dominant dense random matrices generated using
\code{randSPDDenseDD}.
Baselines were computed using the Cholesky factorization; mean and standard
deviation are reported over ten iterations.
}
\label{tbl:dense-dd-abs}
\normalsize
\end{table}
%
%%%%%%%%%%%%%%%%%%%%%%%%%%%%%%%%%%%%%%%%%%%%%%%%%%%%%%%%%%%%%%%%%%%%%%%%%%%%%
% Dense Results
%
Finally, we discuss the parallel speedup in Figure~\ref{fig:dense-speedup-np},
which shows the relative speedup of the approximate algorithm with respect to
the baseline Cholesky algorithm.
For this evaluation, we set $m=4$ and varied the number of processes, denoted by $np$,
from $1$ to $60$.
The main take away from Figure~\ref{fig:dense-speedup-np} is that the
approximate algorithm provides nearly the same or increasingly better speedups
relative to a parallelized version of the exact (Cholesky) algorithm.
For example, for $n=15,000$, the speedups for using the approximate algorithm
are consistently better that $6.5x$.
The absolute values for $\logdet{\matA}$ and timing along with the baseline
numbers for this experiment are given in Table~\ref{tbl:dense-abs}.
We report the numbers in Table~\ref{tbl:dense-abs} at $m=4$ at which point, we
have low relative error.

For the second set of dense experiments, we generated diagonally dominant
matrices using \code{randSPDDenseDD}; we were able to quickly generate and
run benchmarks on matrices of sizes $n \times n$ with $n$ in the set $\{10,000,\ 20,000,\ 30,000,\ 40,000\}$ due to the
relatively simpler procedure involved in matrix generation.
In this set of of experiments, due to the diagonal dominance, all matrices
were well-conditioned.
The results of our experiments on these well-conditioned matrices are
presented in Figure~\ref{fig:dense-dd} and show a marked improved over the
results presented in Figure~\ref{fig:dense}.
First, notice that very few terms of the Taylor series (i.e., small $m$) are
sufficient to get high accuracy approximations; this is apparent in Figure~\ref{fig:dense-dd-accuracy}.
In fact, we see that even at $m=2$, we are near convergence and at $m=3$, for
most of the matrices, we have near-zero relative error.
This experimental result, combined with Figure~\ref{fig:dense-dd-speedup-m} is particularly encouraging; at $m=2$, we seem to not only have a nearly lossless approximation of $\logdet{\matA}$, but also have at least a five-fold speedup.
Similarly to Figure~\ref{fig:dense}, the speedups are better for larger matrices.
For example, for $n=40,000$, the speedup at $m=2$ is nearly twenty-fold.
We conclude our analysis by presenting Figure~\ref{fig:dense-dd-speedup-np},
which similarly to Figure~\ref{fig:dense-speedup-np}, points out that at any
level of parallelism, Algorithm~\ref{alg1a} maintains its relative performance
over the exact (Cholesky) factorization.
The absolute values for $\logdet{\matA}$ and the corresponding running times, along with the baseline
for this experiment are presented in Table~\ref{tbl:dense-dd-abs}.
We report the numbers in Table~\ref{tbl:dense-abs} at $m=2$, at which point we
have a low relative error.
%
%%%%%%%%%%%%%%%%%%%%%%%%%%%%%%%%%%%%%%%%%%%%%%%%%%%%%%%%%%%%%%%%%%%%%%%%%%%%%

\subsection{Sparse Matrices}

\label{subsec:sparse_matrices}
\vspace{0.02in}\noindent \textbf{Data Synthesis.}
To generate a sparse, synthetic matrix $\matA\in{}\mathbb{R}^{n\times{}n}$, with
$nnz$ non-zeros, we use a Bernoulli distribution to determine the location of
the non-zero entries and a uniform distribution to generate the values.
First, we completely fill the $n$ principle diagonal entries.
Next, we generate $(nnz-n)/2$ index positions in the upper triangle for
the non-zero entries by sampling from a Bernoulli distribution with
probability $(nnz-n)/(n^2-n)$.
We reflect each entry across the principle diagonal to ensure that $\matA$ is symmetric and we add $n$ to each diagonal entry to ensure that $\matA$ is SPD (actually, $\matA$ is also diagonally dominant).

\vspace{0.02in}\noindent \textbf{Real Data.}
\noindent
To demonstrate the prowess of Algorithm~\ref{alg1a} on real-world data, we
used SPD matrices from the University of Florida's sparse matrix
collection~\cite{davis2011university}.
The complete list of matrices from this collection used in our experiments, as well as a brief description of each matrix, is given in columns 1--4 of Table~\ref{tbl:ufl}.
\begin{table*}
\scriptsize
\centering
\begin{tabular}{|c|c|c|c|c|c|c|c|c|c|}
\hline
\multirow{3}{*}{name} &
\multirow{3}{*}{$n$} &
\multirow{3}{*}{$nnz$} &
\multirow{3}{*}{area of origin} &
\multicolumn{3}{|c|}{\logdet{\matA}} &
\multicolumn{2}{|c|}{time (sec)} &
\multirow{3}{*}{$m$} \\ \cline{5-9}

 & & & & \multirow{2}{*}{exact}  &
         \multicolumn{2}{|c|}{approx} &
         \multirow{2}{*}{exact} &
         approx & \\ \cline{6-7} \cline{9-9}

 & & & & & mean & std & & mean & \\\hline
thermal2        & 1228045  & 8580313   & Thermal         & 1.3869e6   & 1.3928e6  & 964.79 & 31.28  & 31.24 & 149 \\ \hline
ecology2        & 999999   & 4995991   & 2D/3D           & 3.3943e6   & 3.403e6   & 1212.8 & 18.5   & 10.47 & 125 \\ \hline
ldoor           & 952203   & 42493817  & Structural      & 1.4429e7   & 1.4445e7  & 1683.5 & 117.91 & 17.60 &  33 \\ \hline
thermomech\_TC  & 102158   & 711558    & Thermal         & -546787    & -546829.4 & 553.12 & 57.84  &  2.58 &  77 \\ \hline
boneS01         & 127224   & 5516602   & Model reduction & 1.1093e6   & 1.106e6   & 247.14 & 130.4  &  8.48 & 125 \\ \hline
%G3\_circuit & 1585478 & 7660826   & Circuit Simulation & & & & \\ \hline
%Flan\_1565 & 1564794 & 114165372 & Structural & & & & \\ \hline
%Hook\_1498 & 1498023 & 59374451  & Structural & & & & \\ \hline
%StocF-1465 & 1465137 & 22470526  & CFD &        & & & \\ \hline
%Geo\_1438  & 1437960 & 60236322  & Structural  & & & & \\ \hline
%Serena     & 1391349 & 64131971  & Structural  & & & & \\ \hline
%bone010    & 986703  & 47851783  & Model reduction & & & & \\ \hline
%2cubes\_sphere & 101492 & 1647264 & Electromagnetic & & & & \\ \hline
%thermomech\_TK & 102158 & 711558  & Thermal & & & & \\ \hline
%x104           & 108384 & 8713602 & Structural & & & & \\ \hline
%shipsec8       & 114919 & 3303553 & Structural & & & & \\ \hline
%ship\_003      & 121728 & 3777036 & Structural & & & & \\ \hline
%cfd2           & 123440 & 3085406 & CFD & & & & \\ \hline
%shipsec1       & 140874 & 3568176 & Structural & & & & \\ \hline
\end{tabular}
\caption{
Description of the SPD matrices from the University of Florida sparse
matrix collection~\cite{davis2011university} that were used in our experiments.
All experiments were run sequentially ($np=1$) using Eigen.
Accuracy results for Algorithm~\ref{alg1a} are reported using both the mean and the standard
deviation over ten iterations at (with $t=5$ and $p=5$); we only report the mean for the running times, since the standard deviation is negligible.
The exact $\logdet{\matA}$ was computed using the Cholesky factorization.
}
\label{tbl:ufl}
\end{table*}

\vspace{0.02in}\noindent \textbf{Evaluation.}
\noindent
It is tricky to pick any single method as the ``exact method'' to compute the
$\logdet{\matA}$ for a sparse SPD matrix $\matA$.
One approach would be to use direct methods such as Cholesky decomposition of
$\matA$~\cite{davis2006direct,gupta2000wsmp}.
For direct methods, it is difficult to derive an analytical solution
for the number of operations required for the factorization as a function of the
number of non-zero entries of the matrix, as this is highly dependent on the structure of
the matrix~\cite{gupta1997highly}.
In the distributed setting, one also needs to consider the volume of
communication involved, which is often the bottleneck.
Alternately, we can use iterative methods to compute the eigenvalues of
$\matA$~\cite{davidson1975iterative} and use the eigenvalues to compute
$\logdet{\matA}$.
It is clear that the worst case performance of both the direct and iterative
methods is $O(n^3)$.
However, iterative methods are typically used to compute a few eigenvalues and
eigenvectors: therefore, we chose to use the Cholesky factorization based on
matrix reordering to compute the exact value of $\logdet{\matA}$.
It is important to note that both the direct and iterative methods are
notoriously hard to implement, which comes to stark contrast with the almost trivial implementation of Algorithm~\ref{alg1a}, which is also readily parallelizable.
\begin{figure}[t]
\begin{center}
\subfigure[Convergence with $m$]{%
\includegraphics[height=0.25\textwidth,width=0.45\textwidth]{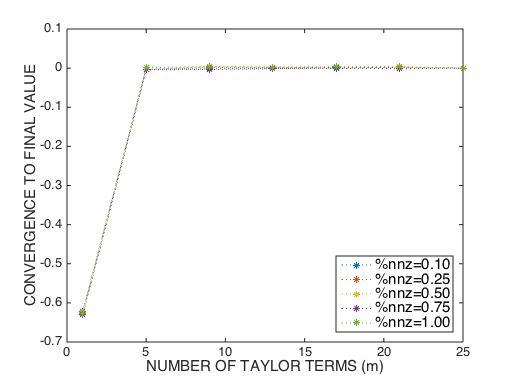}
\label{fig:sparse-convergence-m}
}
\subfigure[Cost as a function of $m$]{%
\includegraphics[height=0.25\textwidth,width=0.45\textwidth]{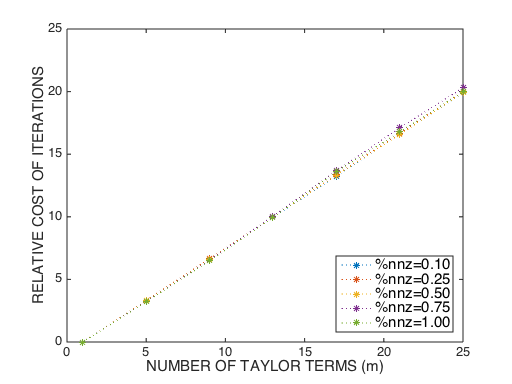}
\label{fig:sparse-cost-m}
}
\end{center}
\caption{
Panels~\ref{fig:sparse-convergence-m} and~\ref{fig:sparse-cost-m} depict the
effect of the number of terms in the Taylor expansion, $m$, (see
Algorithm~\ref{alg1a}) on the convergence to the final solution and the time to
completion of the approximation.
The matrix size was fixed at $n=10^6$ and sparsity was varied as $0.1\%,
0.25\%, 0.5\%, 0.75\%, \textnormal{ and } 1\%$.
Experiments were run sequentially ($np=1$) and we set $p=60$, $t=2\log\sqrt{4n}$.
For panel~\ref{fig:sparse-convergence-m}, the baseline is the final value of
$\logdet{\matA}$ at $m=25$.
For panel~\ref{fig:sparse-cost-m}, the baseline is the time to completion of the approximation algorithm with $m=1$.
Eigen was used as the backend for these experiments.
}
\label{fig:sparse}
\end{figure}

\vspace{0.02in}\noindent \textbf{Results.}
The true power of Algorithm~\ref{alg1a} lies in its ability to approximate
$\logdet{\matA}$ for sparse $\matA$.
The Cholesky factorization can introduce $O(n^2)$ non-zeros during
factorization due to fill-in; for many problems, there is insufficient
memory to factorize a large, sparse matrix.
In our first set of experiments, we wanted to show the effect of $m$ on: (1)
convergence of $\logdet{\matA}$, and (2) cost of the solution.
To this end, we generated sparse, diagonally dominant SPD matrices of size
$n=10^6$ and varied the sparsity from $0.1\%$ to $1\%$ in increments of
$0.25\%$.
We did not attempt to compute the exact $\logdet{\matA}$ for these synthetic
matrices --- our aim was to merely study the speedup with $m$ for different
sparsities, while $t$ and $p$ were held constant at $2\log\sqrt{4n}$ and $60$
respectively.
The results are shown in Figure~\ref{fig:sparse}.
Figure~\ref{fig:sparse-convergence-m} depicts the convergence of
$\logdet{\matA}$ measured as a relative error of the current estimate over the
final estimate.
As can be seen --- for well conditioned matrices --- convergence is quick.
Figure~\ref{fig:sparse-cost-m} shows the relative cost of increasing $m$; here
the baseline is $m=1$.
Therefore, the additional cost incurred by increasing $m$ is linear when
all other parameters are held constant.

%
%%%%%%%%%%%%%%%%%%%%%%%%%%%%%%%%%%%%%%%%%%%%%%%%%%%%%%%%%%%%%%%%%%%%%%%%%%%%%
% UFL Matrices
The results of running Algorithm~\ref{alg1a} on the UFL matrices are shown in
Table~\ref{tbl:ufl}.
%~\footnote{We tried to run with a few larger SPD matrices
%from UFL collection (eg., Flan\_1565, Hook\_1498, Geo\_1438, StocF-1465), but
%we were unable to finish Cholesky factorization in a reasonable amount of time
%when running sequentially.}.
%
The numbers reported for the approximation are the mean and standard deviation
over ten iterations, $t=5$, and $p=5$~\footnote{We experimented with different
$p,t$ and settled on the smallest values that did not result in loss in
accuracy.}.
The value of $m$ was varied between one and 150 in increments of five to select the best
average accuracy.
The matrices shown in Table~\ref{tbl:ufl} have a nice structure, which lends
itself to nice reorderings and therefore an efficient computation of the Cholesky factorization.
We see that even in such cases, the performance of Algorithm~\ref{alg1a} is
commendable due to its lower
algorithmic complexity; \texttt{ldoor} is the only exception as the
approximation takes longer to compute than the Cholesky factorization.
In the case of \texttt{thermomech\_TC}, we achieve good accuracy while achieving
a 22x speedup.
%
%The most expensive operations in Algorithm~\ref{alg1a} are the (sparse-)matrix
%vector products, where the number of vectors is $p$, a parameter which is
%independent of $n$.
%
%Note that, for efficiency, we rewrite the $p$ matrix-vector products
%as a single symmetric matrix-matrix product.
%%
%As we show, it is possible to improve performance of Algorithm~\ref{alg1a} by
%tuning $p$, $t$, and $m$. 

\section{Conclusions}\label{sec:conclusions}

Prior work has presented approximation algorithms for the logarithm of the determinant of
a symmetric positive definite matrix; those algorithms either do not work for all SPD matrices, or do
not admit a worst-case theoretical analysis, or both.
In this work, we presented an approximation algorithm to compute the logarithm
of the determinant of a SPD matrix that comes with strong theoretical worst-case
analysis bounds and can be applied to \emph{any} SPD matrix. A simplification of our algorithm delivers relative-error approximation guarantees for a popular special case of SPD matrices.
Using state-of-the-art \Cpp{} numerical linear algebra software packages for
both dense and sparse matrices, we demonstrated that the proposed approximation
algorithm performs remarkably well in practice in serial and parallel
environments.

\newpage
\bibliographystyle{alpha}
\bibliography{references}

\newcommand{\etalchar}[1]{$^{#1}$}
\begin{thebibliography}{PMVdG{\etalchar{+}}13}

\bibitem[AT11]{AT11}
H.~Avron and S.~Toledo.
\newblock {Randomized Algorithms for Estimating the Trace of an Implicit
  Symmetric Positive Semi-definite Matrix}.
\newblock {\em J. ACM}, 58(2):8, 2011.

\bibitem[BP99]{BP99}
Ronald~Paul Barry and R~Kelley Pace.
\newblock Monte carlo estimates of the log determinant of large sparse
  matrices.
\newblock {\em Linear Algebra and its applications}, 289(1):41--54, 1999.

\bibitem[Cur09]{curran2009variation}
Paul~F Curran.
\newblock On a variation of the gershgorin circle theorem with applications to
  stability theory.
\newblock 2009.

\bibitem[Dav75]{davidson1975iterative}
Ernest~R Davidson.
\newblock The iterative calculation of a few of the lowest eigenvalues and
  corresponding eigenvectors of large real-symmetric matrices.
\newblock {\em Journal of Computational Physics}, 17(1):87--94, 1975.

\bibitem[Dav06]{davis2006direct}
Timothy~A Davis.
\newblock {\em Direct methods for sparse linear systems}, volume~2.
\newblock SIAM, 2006.

\bibitem[dBEG08]{d2008first}
Alexandre d'Aspremont, Onureena Banerjee, and Laurent El~Ghaoui.
\newblock First-order methods for sparse covariance selection.
\newblock {\em SIAM Journal on Matrix Analysis and Applications}, 30(1):56--66,
  2008.

\bibitem[DH11]{davis2011university}
Timothy~A Davis and Yifan Hu.
\newblock The university of florida sparse matrix collection.
\newblock {\em ACM Transactions on Mathematical Software (TOMS)}, 38(1):1,
  2011.

\bibitem[EGV00]{eberly2000computing}
Wayne Eberly, Mark Giesbrecht, and Gilles Villard.
\newblock On computing the determinant and smith form of an integer matrix.
\newblock In {\em Foundations of Computer Science, 2000. Proceedings. 41st
  Annual Symposium on}, pages 675--685. IEEE, 2000.

\bibitem[FHT08]{friedman2008sparse}
Jerome Friedman, Trevor Hastie, and Robert Tibshirani.
\newblock Sparse inverse covariance estimation with the graphical lasso.
\newblock {\em Biostatistics}, 9(3):432--441, 2008.

\bibitem[GJ{\etalchar{+}}10]{eigenweb}
Ga\"{e}l Guennebaud, Beno\^{i}t Jacob, et~al.
\newblock Eigen v3.
\newblock http://eigen.tuxfamily.org, 2010.

\bibitem[GKK97]{gupta1997highly}
Anshul Gupta, George Karypis, and Vipin Kumar.
\newblock Highly scalable parallel algorithms for sparse matrix factorization.
\newblock {\em Parallel and Distributed Systems, IEEE Transactions on},
  8(5):502--520, 1997.

\bibitem[Gup00]{gupta2000wsmp}
Anshul Gupta.
\newblock Wsmp: Watson sparse matrix package (part-i: direct solution of
  symmetric sparse systems).
\newblock {\em IBM TJ Watson Research Center, Yorktown Heights, NY, Tech. Rep.
  RC}, 21886, 2000.

\bibitem[GVDG08]{goto2008high}
Kazushige Goto and Robert Van De~Geijn.
\newblock High-performance implementation of the level-3 blas.
\newblock {\em ACM Transactions on Mathematical Software (TOMS)}, 35(1):4,
  2008.

\bibitem[HAB14]{hunter2014computing}
Timothy Hunter, Ahmed~El Alaoui, and Alexandre Bayen.
\newblock Computing the log-determinant of symmetric, diagonally dominant
  matrices in near-linear time.
\newblock {\em arXiv preprint arXiv:1408.1693}, 2014.

\bibitem[HMAS16]{Han2016}
Insu Han, Dmitry Malioutov, Haim Avron, and Jinwoo Shin.
\newblock {Approximating the Spectral Sums of Large-scale Matrices using
  Chebyshev Approximations}.
\newblock jun 2016.

\bibitem[HMS15]{icml2015_hana15}
Insu Han, Dmitry Malioutov, and Jinwoo Shin.
\newblock Large-scale log-determinant computation through stochastic chebyshev
  expansions.
\newblock In David Blei and Francis Bach, editors, {\em Proceedings of the 32nd
  International Conference on Machine Learning (ICML-15)}, pages 908--917. JMLR
  Workshop and Conference Proceedings, 2015.

\bibitem[HSD{\etalchar{+}}13]{hsieh2013big}
Cho-Jui Hsieh, M{\'a}ty{\'a}s~A Sustik, Inderjit~S Dhillon, Pradeep~K
  Ravikumar, and Russell Poldrack.
\newblock Big \& quic: Sparse inverse covariance estimation for a million
  variables.
\newblock In {\em Advances in Neural Information Processing Systems}, pages
  3165--3173, 2013.

\bibitem[KL13]{kambadur2013parallel}
Prabhanjan Kambadur and Aurelie Lozano.
\newblock A parallel, block greedy method for sparse inverse covariance
  estimation for ultra-high dimensions.
\newblock In {\em Proceedings of the Sixteenth International Conference on
  Artificial Intelligence and Statistics}, pages 351--359, 2013.

\bibitem[LP01]{lesage2001spatial}
James~P LeSage and R~Kelley Pace.
\newblock Spatial dependence in data mining.
\newblock In {\em Data Mining for Scientific and Engineering Applications},
  pages 439--460. Springer, 2001.

\bibitem[LZL05]{leithead2005efficient}
WE~Leithead, Yunong Zhang, and DJ~Leith.
\newblock Efficient gaussian process based on bfgs updating and logdet
  approximation.
\newblock In {\em the 16th IFAC world congress}, 2005.

\bibitem[Mar92]{Mar92}
RJ~Martin.
\newblock Approximations to the determinant term in gaussian maximum likelihood
  estimation of some spatial models.
\newblock {\em Communications in Statistics-Theory and Methods},
  22(1):189--205, 1992.

\bibitem[PB97]{pace1997quick}
R~Kelley Pace and Ronald Barry.
\newblock Quick computation of spatial autoregressive estimators.
\newblock {\em Geographical analysis}, 29(3):232--247, 1997.

\bibitem[PBGS00]{pace2000method}
R~Kelley Pace, Ronald Barry, Otis~W Gilley, and CF~Sirmans.
\newblock A method for spatial--temporal forecasting with an application to
  real estate prices.
\newblock {\em International Journal of Forecasting}, 16(2):229--246, 2000.

\bibitem[PL04]{pace2004chebyshev}
R~Kelley Pace and James~P LeSage.
\newblock Chebyshev approximation of log-determinants of spatial weight
  matrices.
\newblock {\em Computational Statistics \& Data Analysis}, 45(2):179--196,
  2004.

\bibitem[PMVdG{\etalchar{+}}13]{poulson2013elemental}
Jack Poulson, Bryan Marker, Robert~A Van~de Geijn, Jeff~R Hammond, and
  Nichols~A Romero.
\newblock Elemental: A new framework for distributed memory dense matrix
  computations.
\newblock {\em ACM Transactions on Mathematical Software (TOMS)}, 39(2):13,
  2013.

\bibitem[Reu02]{Reusken2002}
Arnold Reusken.
\newblock Approximation of the determinant of large sparse symmetric positive
  definite matrices.
\newblock {\em SIAM Journal on Matrix Analysis and Applications},
  23(3):799--818, 2002.

\bibitem[SAI16]{Saibaba2016}
Arvind~K. Saibaba, Alen Alexanderian, and Ilse C.~F. Ipsen.
\newblock {Randomized Matrix-free Trace and Log-Determinant Estimators}.
\newblock page~37, may 2016.

\bibitem[SMDS11]{salmon2011parallel}
John~K Salmon, Mark~A Moraes, Ron~O Dror, and David~E Shaw.
\newblock Parallel random numbers: as easy as 1, 2, 3.
\newblock In {\em High Performance Computing, Networking, Storage and Analysis
  (SC), 2011 International Conference for}, pages 1--12. IEEE, 2011.

\bibitem[ST04]{spielman2004nearly}
Daniel~A Spielman and Shang-Hua Teng.
\newblock Nearly-linear time algorithms for graph partitioning, graph
  sparsification, and solving linear systems.
\newblock In {\em Proceedings of the thirty-sixth annual ACM symposium on
  Theory of computing}, pages 81--90. ACM, 2004.

\bibitem[Tre11]{LT_Lecture}
Luca Trevisan.
\newblock {Graph Partitioning and Expanders}.
\newblock Handout 7, 2011.

\bibitem[ZL07]{zhang2007approximate}
Yunong Zhang and William~E Leithead.
\newblock Approximate implementation of the logarithm of the matrix determinant
  in gaussian process regression.
\newblock {\em journal of Statistical Computation and Simulation},
  77(4):329--348, 2007.

\bibitem[ZLLW08]{zhang2008log}
Yunong Zhang, WE~Leithead, DJ~Leith, and L~Walshe.
\newblock Log-det approximation based on uniformly distributed seeds and its
  application to gaussian process regression.
\newblock {\em Journal of Computational and Applied Mathematics},
  220(1):198--214, 2008.

\end{thebibliography}

\end{document}